\nolinenumbers \graphicspath{{figures/}} 
\newcommand{\noadjust}{\hspace{1sp}}
\tikzset{kleisli/.style={ postaction={decorate, decoration={markings, mark= at position 0.5 with {
        \draw circle[radius=1.5pt]; }}
}}}
\newcounter{tikzcdequation}
\renewcommand*\thetikzcdequation{(\arabic{tikzcdequation})}
\newif\iftikzcd@eq@customtag
\def\tikzcd@eq@tag{#1},
\def\tikzcd@eq@tag{\thetikzcdequation},
\def\tikzcd@eq@at{no at},
\def\tikzcd@eq@at{matrix center},
\def\tikzcd@eq@at{diagram center},
\def\tikzcd@eq@at{at={#1}}},
\def\thetikzcdequation{(#1)},commutative diagrams/number/@@do={#2}}}
\newcommand{\diagramlabel}[4][]{\ifx\\#1\\\arrow[from=#2, to=#3, number={label=cd:#4}]\else \arrow[from=#2, to=#3, number={label=cd:#4, tag=#1}]\fi }
\crefname{tikzcdequation}{diagram}{diagrams}
\Crefname{tikzcdequation}{Diagram}{Diagrams}
\title{Active Learning of Upward-Closed Sets of Words} 
\author{Quentin Aristote}{Université Paris Cité, CNRS, Inria, IRIF, F-75013,
  Paris,
  France}{quentin.aristote@irif.fr}{https://orcid.org/0009-0001-4061-7553}{} 
\authorrunning{Q. Aristote} 
\keywords{active learning, well quasi-orders, Valk-Jantzen lemma, piecewise-testable languages, monoids} 
 \newcommand{\leftpar}{\mathopen{}\left(}
\newcommand{\rightpar}{\right)\mathclose{}}
\newcommand{\naturals}{\mathbb{N}}
\DeclareMathSymbol{:}{\mathpunct}{operators}{"3A}
\newcommand{\mathand}{\mathrel{\mathrm{and}}}
\newcommand{\suchthat}[2]{\left\{ #1 ~\middle|~ #2 \right\}}
\newcommand{\namedCat}[1]{\mathbf{#1}}
\newcommand{\QO}{\namedCat{QO}}
\newcommand{\down}[1]{\left\downarrow{#1}\right.}
\newcommand{\up}[1]{\left\uparrow{#1}\right.}
\newcommand{\letters}{\Sigma}
\newcommand{\alphabetQO}[1][\letters]{\leftpar{#1},{\preceq}\rightpar}
\newcommand{\words}[1][\letters]{{#1}^*}
\newcommand{\subwordQO}[1][\letters]{\leftpar\words[#1],{\preceq_*}\rightpar}
\newcommand{\atom}{\mathfrak{A}}
\DeclareMathOperator{\Idl}{Idl}
\newcommand{\states}{\mathtt{st}}
\newcommand{\oracle}[1]{\textsc{Oracle}_{#1}}
\newcommand{\abs}[1]{\left| #1 \right|}
\begin{document}

\maketitle

\begin{abstract}
  We give a new proof of a result from well quasi-order theory on the
  computability of bases for upward-closed sets of words. This new proof is
  based on Angluin's L* algorithm, that learns an automaton from a minimally
  adequate teacher. This relates in particular two results from the 1980s:
  Angluin's L* algorithm, and a result from Valk and Jantzen on the
  computability of bases for upward-closed sets of tuples of integers.

  Along the way, we describe an algorithm for learning quasi-ordered automata
  from a minimally adequate teacher, and extend a generalization of Valk and
  Jantzen's result, encompassing both words and integers, to finitely generated
  monoids.
\end{abstract}

\section{Introduction}
\label{sec:intro}

\paragraph*{Active learning of automata}
\label{sec:intro:active-learning}

In 1987, Angluin published a seminal work in computational learning theory
\cite{angluinLearningRegularSets1987}. She proved that the minimal complete
deterministic finite-state automaton (DFA) recognizing a regular language $L
\subseteq \words$ can be learned with the help of a \emph{minimally adequate
  teacher (MAT)}, i.e. an oracle able to answer two kinds of queries:
\begin{itemize}
\item \emph{membership queries} -- given a word $w \in \words$, the oracle
  decides whether $w \in L$ --
\item and \emph{equivalence queries} -- given an automaton $A$ which can be
  assumed to be minimal for the language it recognizes, the oracle decides
  whether this language is $L$, and, in the negative case, also outputs a
  counterexample word $w \in \words$ such that $w$ is in $L$ but is not
  recognized by $A$, or conversely.
\end{itemize}
Angluin's algorithm, called $L^*$, computes the minimal DFA
recognizing $L$. Its complexity is polynomial in the size of this minimal
automaton and the maximum length of the counterexamples.

\looseness=-1 The $L^*$ algorithm has proven very versatile over time: it has
been extended to a number of other families of automata, see e.g.
\cite[\textsection 1.2]{vanheerdtCALFCategoricalAutomata2020} for a survey. The
similarities between these extensions and the need to unify them has motivated
several categorical approaches to MAT learning
\cite{colcombetLearningAutomataTransducers2021,vanheerdtCALFCategoricalAutomata2020,urbatAutomataLearningAlgebraic2020,jacobsAutomataLearningCategorical2014,barloccoCoalgebraLearningDuality2019},
some of which have in turn been used to develop new extensions of Angluin's
algorithm
\cite{aristoteActiveLearningDeterministic2024,vanheerdtLearningWeightedAutomata2020}.
In this work we will use yet another extension of the $L^*$ algorithm, for
automata with ordered sets of states, to recover a proof of decidability in well
quasi-order theory.

\paragraph*{The generalized Valk-Jantzen lemma}
\label{sec:intro:generalized-vj-lemma}

Two years before Angluin's 1987 result, Valk and Jantzen proved a seemingly
unrelated theorem. Order $\naturals^k$ component-wise, and, for $z \in
(\naturals_* \cup \{ \infty \})^k$ (where $\naturals_*$ denotes the set of
positive integers), write $I_z = \{(x_1, \ldots, x_n) \in \naturals^k \mid
\forall i, x_i < z_i \}$.

\begin{theorem}[Valk-Jantzen lemma {\cite[Theorem 2.14]{valkResidueVectorSets1985}}]
  \label{thm:vj}
  Fix an upward-closed subset $U \subseteq \naturals^k$. Then, given an oracle
  that decides for any subset $I_z $ with $z \in (\naturals_* \cup \{ \infty
  \})^k$ whether $U$ intersects $I_z$, there is an algorithm that computes a
  minimal \emph{basis} $B$ of $U$ -- a minimal $B \subseteq \naturals^k$ such
  that $U = \suchthat{u \in \naturals^k}{\exists b \in B, u \ge b}$.
\end{theorem}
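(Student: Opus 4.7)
My plan is to reduce the problem to computing the finitely many maximal ideals of the downwards-closed complement $\naturals^k \setminus U$. By Dickson's lemma, $\naturals^k$ is well-quasi-ordered under componentwise comparison, so $U$ has a finite minimal basis and, dually, every downwards-closed subset of $\naturals^k$ decomposes uniquely as a finite union of maximal ideals of the form $I_z$ with $z \in (\naturals_* \cup \{\infty\})^k$. Writing $\naturals^k \setminus U = \bigcup_i I_{z^{(i)}}$, the minimal basis of $U$ can be read off from the $z^{(i)}$: it is the minimal basis of the upwards-closed complement $\naturals^k \setminus \bigcup_i I_{z^{(i)}}$, and finite intersections and complements of finitely-based upwards-closed subsets of $\naturals^k$ can be computed explicitly (the complement of a single $I_z$ is generated by $\{z_i e_i : z_i \ne \infty\}$, and intersections amount to taking componentwise maxima of pairs of generators).

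The algorithm would enumerate all tuples $z \in (\naturals_* \cup \{\infty\})^k$ in some well-founded order and, for each, ask the oracle whether $I_z \cap U = \emptyset$; confirmed ideals are added to a running collection $\mathcal{I}$, from which non-maximal members are pruned. After each update, I would compute the finite basis $B'$ of the upwards-closed set $V \coloneqq \naturals^k \setminus \bigcup \mathcal{I}$ and test each $b \in B'$ for membership in $U$. The key observation enabling this test is that since $U$ is upwards-closed, $b \in U$ holds iff $U \cap I_{(b_1+1, \ldots, b_k+1)} \ne \emptyset$: indeed any witness $u$ to the non-emptiness satisfies $u \le b$ componentwise and hence $b \ge u \in U$ forces $b \in U$. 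The algorithm halts once every element of $B'$ lies in $U$, at which point $V = U$ and $B'$ is the desired basis. Termination is clear: since $\naturals^k \setminus U$ has only finitely many maximal ideals and each must eventually appear in the enumeration, the halting condition is eventually triggered.

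The main obstacle I anticipate is not in executing this procedure but in cleanly establishing the structural facts it relies on, primarily that downwards-closed subsets of $\naturals^k$ decompose into a \emph{finite} union of ideals of the form $I_z$, and that the maximality check for such ideals is effective. Both ultimately rest on Dickson's lemma plus the fact that downwards-closed subsets of a single $\naturals$ are intervals $\{0, \ldots, z-1\}$ or all of $\naturals$. Generalising this to arbitrary well-quasi-orders -- which the paper seems to aim for -- would demand identifying the right abstract notion of ``ideal'' and arranging that the oracle queries remain well-formed in that broader setting, and that is where I expect the conceptual difficulty to lie.
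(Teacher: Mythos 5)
Your proof is correct, and it takes a genuinely different route from the paper. You give a direct argument inside $\naturals^k$: decompose the downwards-closed complement $\naturals^k \setminus U$ into finitely many ideals $I_z$ by Dickson's lemma, enumerate candidate tuples $z$, harvest those the oracle certifies as disjoint from $U$, and at each step compute the explicit complement/intersection arithmetic in $\naturals^k$ to obtain a candidate basis $B'$, halting once $B' \subseteq U$ (tested via the observation that $I_{(b_1+1,\ldots,b_k+1)} = {\down{b}}$). The invariants $\up{B'} \supseteq U$ (since $\bigcup \mathcal{I} \subseteq U^c$) and the halting criterion $B' \subseteq U$ together force $\up{B'} = U$, and termination follows because the finitely many maximal ideals of $U^c$ all eventually appear. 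The paper never argues this way; it proves \Cref{thm:vj} only as the terminal link in a chain of reductions: \Cref{ex:active-learning-vj} realises $\naturals^k$ as the free commutative monoid quotient $\abs{-} : \words \to \naturals^k$, \Cref{thm:generalized-vj-monoids} pulls $U$ back along that quotient to reduce to \Cref{corollary:generalized-vg-words} on words, and that is in turn proved by learning a quasi-ordered automaton from an MAT (\Cref{thm:generalized-vg-words-automaton,thm:qo-automata-active-learning}), reducing ultimately to Angluin's $L^*$. What your argument buys is elementarity and self-containment, with no detour through formal languages or automata; it is close in spirit to Goubault-Larrecq's generic proof (as recalled for words in \Cref{sec:proof-by-rewriting}), though structurally dual — you enumerate ideals of the complement rather than candidate basis elements. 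What the paper's route buys is precisely the conceptual link it is after: exhibiting Valk--Jantzen as a consequence of active learning. Your closing caveat is well placed: the ideal arithmetic you use is specific to $\naturals^k$; making it uniform across WQOs is exactly the ``effective complement property'' the paper discusses and which its automata-based proof is designed to sidestep.
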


This result is used, by Valk and Jantzen themselves but also later by other
authors, to show the decidability of a number of problems regarding Petri nets
and similar state machines: see
\cite{goubault-larrecqGeneralizationResultValk2009} for a list of these. In
\textit{ibid.}, Goubault-Larrecq extends \Cref{thm:vj} to a whole grammar of
\emph{well quasi-orders}: this last result is called the \emph{generalized
  Valk-Jantzen lemma}.

Recall that a \emph{quasi-order (QO)}, also called a \emph{preorder}, is a set
equipped with a reflexive and transitive relation. Given a QO $(X,{\le})$, let
$\down{A} = \suchthat{x \in X}{\exists y \in A, x \le y}$ denote the
\emph{downward-closure} of $A \subseteq X$, $\up{A} = \suchthat{x \in X}{\exists
  y \in A, x \ge y}$ its \emph{upward-closure}, and for $x \in X$ let $\down{x}
= \down{\{x\}}$ and $\up{x} = \up{\{x\}}$. An $x \in X$ is \emph{minimal} in
$(X,{\le})$ when it is smaller than every other $y \in \down{x}$. A subset $A$
of a QO $(X,{\le})$ is called \emph{downwards-closed} when $A = \down{A}$, and
upward-closed when $A = \up{A}$.

\looseness=-1
\noindent A QO $(X,{\le})$ is \emph{well (WQO)} when all its upward-closed
subsets $U$ have a finite \emph{basis}, i.e. can be written $U = \up{A}$ for
some finite $A \subseteq X$. Such a basis is said to be \emph{minimal} when it
does not contain any strictly smaller basis of $U$. Finite QOs and well-founded
linear orders such as $\naturals$ are basic examples of WQOs; more involved
examples include finite products of WQOs, equipped with the component-wise
ordering \cite[Lemma 4.7]{goubault-larrecqIdealApproachComputing2020}, or free
monoids thereon [\textit{ibid.}, \textsection 4.4]: if $\alphabetQO$ is a WQO,
then so is the set $\words$ of finite words on $\letters$ equipped with the
\emph{subword ordering}, given inductively by $\varepsilon \preceq_* w$ for all
$w \in \words$, $x \preceq_* y$ when $x \preceq y$ in $\letters$ and $uv
\preceq_* u'v'$ when $u \preceq_* u'$ and $v \preceq_* v'$ -- informally, $w
\preceq_* w'$ if $w'$ can be obtained from $w$ by inserting new letters and
upgrading existing ones. This last example of a WQO is the primary one this work
is concerned with. When $(\letters,=)$ is a finite alphabet the upward-closed
sets of $\subwordQO$ are called \emph{piecewise-testable languages} in formal
language theory: they are the ones specified by unions of regular expressions of
the shape $\words \sigma_1 \words \cdots \words \sigma_n \words$ with $\sigma_i
\in \letters$. It is easy to see that if $\letters$ is still finite but now
equipped with a non-trivial QO structure, the upward-closed sets of
$\subwordQO$ are still regular languages.

The last missing ingredient to understand Goubault-Larrecq's generalization of
Valk and Jantzen's result is the notion of \emph{ideal}: an ideal $I$ is a
non-empty, downwards-closed and \emph{directed} subset, the latter meaning that
for every $x, y \in I$, there is a $z$ greater than both $x$ and $y$ in $I$.
Denote by $\Idl(X,{\le})$ the set of ideals of $(X,{\le})$ ordered by inclusion.
Notice that for every $x \in D$, $\down{x}$ is an ideal, called a
\emph{principal} one.

\begin{theorem}[generalized Valk-Jantzen lemma, {\cite[Theorem
    1]{goubault-larrecqGeneralizationResultValk2009}}]
  \label{thm:generalized-vj}
  Let $(X,{\le})$ be a WQO, and $U$ an upward-closed subset thereof. Given an
  oracle that decides for any $I \in \Idl(X,{\le})$ whether $U$ intersects $I$,
  a minimal finite basis of $U$ is computable\footnotemark[1]{}.
\end{theorem}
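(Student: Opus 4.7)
The plan is to build the minimal basis $B$ incrementally by enumerating $X$, maintaining a candidate antichain $B_k$ of the minimal elements of $U$ discovered so far, and using the oracle to detect when $B_k$ is already a basis of $U$.

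First I would observe that membership in $U$ is decidable: for any $x \in X$, since $U$ is upwards-closed, $x \in U$ iff $U \cap \down{x} \ne \emptyset$, and $\down{x}$ is a principal ideal on which the oracle answers. So I would enumerate the elements $x_1, x_2, \ldots$ of $X$ (assuming an effective presentation of the WQO that makes $X$ recursively enumerable), test each one for membership in $U$, and update $B_k$ by replacing it with the antichain of minimal elements of $B_k \cup \{x_i\}$ whenever $x_i \in U$.

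The key step is the stopping criterion. After each update, I want to check whether $\up{B_k} = U$. Note that $\up{B_k} = U$ iff $U \cap (X \setminus \up{B_k}) = \emptyset$, and $X \setminus \up{B_k}$ is downwards-closed; because $(X,{\le})$ is WQO, every downwards-closed subset decomposes as a finite union of ideals $I_1 \cup \cdots \cup I_m$. So the criterion reduces to asking the oracle, for each $j$, whether $U \cap I_j = \emptyset$: if all answers are negative, return $B_k$; otherwise, continue the enumeration. Termination is guaranteed by the WQO property, which supplies a finite minimal basis $B^\star$ of $U$: once the enumeration has passed all elements of $B^\star$ we have $B_k = B^\star$, and the stopping test succeeds. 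Correctness follows because $B_k$ is always an antichain of minimal elements of $U$, so upon termination it \emph{is} the minimal basis.

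The main obstacle — and the reason the theorem is stated with a footnote rather than as a bare algorithmic result — is effectiveness. The argument above requires that one can effectively enumerate $X$, effectively represent and compare ideals, and, most delicately, effectively compute the ideal decomposition of a downwards-closed set of the form $X \setminus \up{B_k}$. These are the properties asked of an \emph{effective} WQO, and in \cite{goubault-larrecqGeneralizationResultValk2009} they are established inductively along a grammar of WQO constructors. An active-learning approach in the style of $L^*$ would have to sidestep these effectiveness requirements, presumably by restricting to a class of WQOs (such as $\subwordQO$, where upwards-closed sets are piecewise-testable regular languages) in which ideals admit a transparent finite description and the upward-closed set itself is recognized by some manageable automaton — which is precisely where this paper seems to be headed.
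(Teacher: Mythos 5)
Your proposal reproduces essentially the same argument that the paper recalls from Goubault-Larrecq and instantiates for subword orderings in \Cref{sec:proof-by-rewriting}: enumerate $X$, test membership via the principal ideals $\down{x}$, and test the stopping condition $\up{B} = U$ by writing the downwards-closed complement $(\up{B})^c$ as a finite union of ideals and querying the oracle on each. You also correctly pinpoint the effective ideal decomposition of $(\up{B})^c$ (the \emph{effective complement property}) as the step that carries all the technical weight and is relegated to the effectiveness hypotheses of the general statement.
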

\noindent \footnotemark[1]{}Of course this theorem also assumes that $(X,{\le})$
and $\Idl(X,{\le})$ are effective, in a way we do not describe here as all the
instances we consider will indeed be so. The precise requirements are given in
\cite[Definitions 3 and 5]{goubault-larrecqGeneralizationResultValk2009}.

\looseness=-1 For $D = \naturals^k$, \Cref{thm:generalized-vj} is Valk and
Jantzen's original result: the ideals of $\naturals^k$ are exactly the $I_z$'s
for $z$ ranging in $(\naturals_* \cup \{\infty\})^k$ \cite[\textsection 3.1.2
and 4.3]{goubault-larrecqIdealApproachComputing2020}. Consider now the case
$\subwordQO$ where $\alphabetQO$ is a finite quasi-ordered alphabet. Then the
ideals of $\subwordQO$ are the regular languages specified by the following
class of regular expressions, called \emph{$*$-products} \cite[Theorem
4.11]{goubault-larrecqIdealApproachComputing2020}. A $*$-product is any finite
concatenation $P \coloneq \atom_1 \ldots \atom_n$ of \emph{atomic expressions},
where an atomic expression $\atom \coloneq \sigma^? \mid (\Sigma')^*$ is either
some $\sigma^? = \{ \varepsilon \} \cup \suchthat{\sigma' \in \letters}{\sigma'
  \le \sigma}$ for $\sigma \in \letters$, or some $\words[(\letters')] =
\suchthat{\sigma_1 \ldots \sigma_n}{n \in \naturals \mathand{} \forall i,
  \sigma_i \in \letters' }$ for $\letters' \subseteq \letters$ non-empty and
downwards-closed. By abuse of language, we will often speak of a $*$-product to
mean the ideal it specifies. The instance of \Cref{thm:generalized-vj} for
subword orderings now reads:
\begin{restatable}[{\noadjust\cite[Theorem 2]{goubault-larrecqGeneralizationResultValk2009}}]{corollary}{generalizedVJforWords}
  \label{corollary:generalized-vg-words}
  Fix a finite quasi-ordered alphabet $\alphabetQO$ and an upward-closed
  subset $U$ of $\subwordQO$. Given an oracle that decides for any $*$-product
  $P$ whether $U$ intersects $P$, a minimal finite basis of $U$ is computable.
\end{restatable}
\noindent This instance of \Cref{thm:generalized-vj} is especially interesting
in that subword orderings are one of the simplest example of WQOs for which
computing intersections of ideals is not polynomial-time \cite[\textsection
6.3]{goubault-larrecqIdealApproachComputing2020}.

\paragraph*{Structure and contributions}
\label{sec:intro:structure-contributions}

In this work we describe a new algorithm that proves
\Cref{corollary:generalized-vg-words}, using active learning of automata. We
then show that the 1985 result of Valk and Jantzen reduces to this
\Cref{corollary:generalized-vg-words}, and thus to the 1987 result of Angluin.

In \Cref{sec:proof-by-rewriting}, we recall and discuss Goubault-Larrecq's
original proof of \Cref{corollary:generalized-vg-words}. On the way to a new
proof, we describe in \Cref{sec:active-learning-ordered-automata}, as our first
contribution, a new $L^*$-like algorithm
(\Cref{thm:qo-automata-active-learning}) for learning \emph{quasi-ordered
  automata}, a family of automata corresponding to regular languages that are
upward-closed. In \Cref{sec:proof-by-active-learning}, we show as a second
contribution how this algorithm for actively learning quasi-ordered automata can
be used to prove \Cref{corollary:generalized-vg-words}
(\Cref{thm:generalized-vg-words-automaton}). The automata-based approach and the
active learning procedure help alleviate the technical details of
Goubault-Larrecq's original proof. In \Cref{sec:generalized-vj-quotients}, we
finally show how Valk-Jantzen-style learnability of a WQO is stable under
quotients: in particular, because \Cref{corollary:generalized-vg-words} holds
for finitely generated free monoids, it also holds for finitely generated
monoids. As a third contribution, we thus extend the class of WQOs for which
Goubault-Larrecq's general \Cref{thm:generalized-vj} holds, and we retrieve in
particular Valk and Jantzen's original result as $\naturals^k$ is the free
commutative monoid on $k$ letters (\Cref{ex:active-learning-vj}).

\section{A proof by rewriting of regular expressions}
\label{sec:proof-by-rewriting}

In this section we first recall and discuss the proof of the generalized
Valk-Jantzen lemma in the case we focus on in this paper, subword orderings.
This proof is obtained by instantiating \cite[Theorem
1]{goubault-larrecqGeneralizationResultValk2009}'s generic proof in this case.

\generalizedVJforWords*

\begin{proof}
  A minimal finite basis of $U$ can always be derived from a non-minimal one by
  removing redundant elements: if $U = \up{B}$ and $b_1, b_2 \in B$ with $b_1
  \le b_2$, then still $U = \up{(B - \{b_2\})}$. Hence, without loss of
  generality, we now describe a procedure that computes a non-necessarily
  minimal basis.

  Start with $B = \varnothing$. Enumerate all the words in $\words$. For each
  such $w = \sigma_1 \cdots \sigma_n$, check if $w \in U$: because $U$ is
  upward-closed, $w \in U$ if and only if $U$ intersects the $*$-product
  $\down{w} = \sigma_1^? \cdots \sigma_n^?$, and this can be checked by querying
  the oracle. If $w \in U$, add $w$ to $B$, and just move on to the next word
  otherwise.

  By construction, it is always true that $B \subseteq U$. Because $U$ is
  upward-closed, it is hence also always true that $\up{B} \subseteq \up{U} =
  U$. On the other hand, every time a word is added to $B$ and $B$ changes, it
  is possible to check with the help of the oracle whether, conversely, $U
  \subseteq \up{B}$. To check this, notice that it holds if and only if $U \cap
  (\up{B})^c = \varnothing$. But $(\up{B})^c$ is downwards-closed, and it is a
  general fact of WQOs that any downwards-closed set is a finite union of ideals
  \cite[Lemma 2.6 and Proposition
  2.10]{goubault-larrecqIdealApproachComputing2020}, so that $(\up{B})^c = P_1
  \cup \cdots \cup P_n$ for some $n \in \naturals$ and some $*$-products
  $(P_i)_{1 \le i \le n}$. Hence if such $P_i$'s are computable from $B$, one
  can check whether $U \subseteq \up{B}$ by asking the oracle whether $U$
  intersects any of the $P_i$'s.

  Let us now see how to compute these $P_i$'s. Note that the fact that this can
  be done means that $\subwordQO$ has the \emph{effective complement property}
  \cite[Definition 5]{goubault-larrecqGeneralizationResultValk2009}, which is
  one of the effectivity conditions we omitted in the generic
  \Cref{thm:generalized-vj}. We do not give the full construction here as it is
  a bit technical, and only recall the two main steps instead.
  \begin{itemize}
  \item For every $b \in B$, $(\up{b})^c$ can be written as a union of
    $*$-products, each informally obtained by disallowing one letter from $b$ to
    appear, see also \cite[Lemma
    4.14]{goubault-larrecqIdealApproachComputing2020} for more details.
  \item Then $(\up{B})^c = \bigcap_{b \in B} (\up{b})^c$ can be rewritten as the
    union of all intersections of any two $*$-products appearing in the
    decompositions of two different $b, b' \in B$. The intersection of two
    $*$-products can always be rewritten as a union of $*$-products \cite[Lemma
    4.16]{goubault-larrecqIdealApproachComputing2020}, and hence $(\up{B})^c$
    can be rewritten into a union of $*$-products.
  \end{itemize}
  All in all, we always have that $\up{B} \subseteq U$, and every time $B$
  changes we can check whether $U \subseteq \up{B}$ and thus whether $U =
  \up{B}$. This must be the case after having considered a finite number of
  words: $U$ has a finite basis $B'$ as $\subwordQO$ is a WQO, and it must be
  that $B' \subseteq B$ once all of $B'$'s words have been considered.
\end{proof}

The main appeal of the above proof is its genericity and modularity: it works in
fact for any WQO that can be shown to be sufficiently effective, in particular
having the aforementionned effective complement property -- an effective way to transform
the complement of an upward-closed set into a union of ideals.

What hinders the algorithm from being polynomial-time is the rewriting of
regular expressions into unions of $*$-products (which we did not describe in
detail), and the brute-force enumeration of words. While we do not claim to give
another algorithm with better asymptotic complexity, this brute-force
enumeration seems particularly inefficient and it is natural to try to improve
it, with heuristics for instance.

Another concern that could be raised is the use of bases to represent languages.
These are usually represented by automata instead, and this can be exponentially
more compact: consider for instance $U$ to be the set of all words of length at
least $n$, whose minimal basis contains $\abs{\letters}^n$ words but for which
an $(n+1)$-state automaton is given in \Cref{fig:qo-automata:compact}. Note that
this is not always the case: the upward-closed language $\up{\{ \sigma_1,
  \sigma_2^2, \ldots, \sigma_n^n \}}$ has a much more compact representation in
its minimal basis than in its minimal automaton. For instance in the case $n =
3$, the minimal automaton for that language is depicted in
\Cref{fig:qo-automata:example}. In general, converting between bases and
automata may thus be costly, and it is all the more interesting to also have an
algorithm for \Cref{corollary:generalized-vg-words} that works with automata
instead of bases.

\begin{figure}[h]
  \caption{Examples of quasi-ordered automata (\Cref{def:qo-automaton}). All the
    automata we consider are assumed to be complete: if a transition $q
    \xrightarrow{\sigma} \cdot$ is not depicted, then it is assumed to be
    looping, i.e. $\delta_\sigma(q) = q$. Initial states are depicted with an
    incoming arrow, accepting states with a double circle. The quasi-order on
    states is not explicitly depicted, but it always goes from left to right.}
  \label{fig:qo-automata}

  \begin{subfigure}{\textwidth}
    \caption{An automaton recognizing the language
      $\up{\leftpar\letters^n\rightpar}$.}
    \label{fig:qo-automata:compact}
    \includegraphics[scale=1]{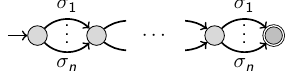}
  \end{subfigure}

  \bigskip

  \begin{subfigure}{\textwidth}
    \caption{An automaton recognizing the language $\up{\{a,bb,ccc\}}$.}
    \label{fig:qo-automata:example}
    \vspace{-15pt}
    \includegraphics[scale=1]{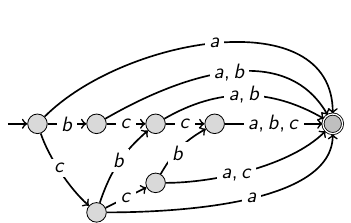}
  \end{subfigure}
\end{figure}

In the rest of this work we address these two concerns by shedding the light of
active learning of automata on the above proof of
\Cref{corollary:generalized-vg-words}. It is quite reasonable to believe that
MAT-based learning should be related to this result. Indeed, looking closely at
the proof, we see that the oracle is really used in two distinct ways: for
checking whether a word belongs to $U$ -- which is reminescent of the membership
queries to an MAT -- and for checking whether a set $B$ is a basis of $U$ --
which is reminescent of the equivalence queries to an MAT. While Angluin's $L^*$
algorithm is only concerned with DFAs and does not involve any QOs, the recent
advances providing categorical frameworks for active learning of automata are a
strong hint that this kind of learning is versatile enough to also be done in a
quasi-ordered setting.

\newpage
\section{Active learning of quasi-ordered automata}
\label{sec:active-learning-ordered-automata}

We now describe how active learning of automata can indeed be done in a
quasi-ordered setting. Because we are only trying to learn those languages that
are upward-closed, DFAs are too general for our purpose. We will focus on a
more restricted family of automata:

\begin{definition}[quasi-ordered automaton]
  \label{def:qo-automaton} Let $\alphabetQO$ be a finite quasi-ordered alphabet.
  A \emph{quasi-ordered automaton} on $\alphabetQO$ is a tuple
  $((Q,{\sqsubseteq}),q_0, \delta, F)$ where:
  \begin{itemize}
  \item $(Q,{\sqsubseteq})$ is a finite QO of \emph{states};
  \item $q_0 \in Q$ is the \emph{initial state};
  \item $\delta : \alphabetQO \times (Q,{\sqsubseteq}) \to
    (Q,{\sqsubseteq})$ is a monotone \emph{transition function} such that $q
    \sqsubseteq \delta(\sigma,q)$ for every $(\sigma,q) \in \letters \times Q$;
  \item $F \subseteq Q$ is an upward-closed subset of \emph{accepting states}.
  \end{itemize}
  The \emph{language recognized by} a quasi-ordered automaton
  $((Q,{\sqsubseteq}),q_0,\delta,F)$ is the language recognized by its
  \emph{underlying DFA} $(Q,q_0,\delta,F)$.
\end{definition}

\noindent Note that the quasi-order on $\alphabetQO \times (Q,{\sqsubseteq})$ is
given by $(\sigma, q) \le (\sigma', q')$ if and only if $\sigma \preceq \sigma'$
and $q \sqsubseteq q'$, so that $\delta$ being monotone means: if $\sigma
\preceq \sigma'$ and $q \sqsubseteq q'$, then $\delta(\sigma,q) \sqsubseteq
\delta(\sigma',q')$. $\delta$ extends to a monotone function $\delta^*:
\subwordQO \times (Q,{\sqsubseteq}) \to (Q,{\sqsubseteq})$, and we also write
$\delta^*_w: (Q,{\sqsubseteq}) \to (Q,{\sqsubseteq})$ for the monotone function
$q \mapsto \delta(w,q)$.

\begin{remark}
  The category-enclined reader may notice that quasi-ordered automata are
  equivalently certain $\QO$-enriched functors, where $\QO$ is the the category
  of quasi-orders and monotone functions between them. Let indeed
  $\namedCat{I}_{\subwordQO}$ be the $\QO$-enriched category freely generated by
  the $\QO$-enriched quiver below (the QO of edges $\states \to \states$ is
  $\alphabetQO$), with the additional requirement that $\varepsilon: \states \to
  \states$ be smaller than every $w: \states \to \states$
\[\begin{tikzcd}
      {\mathtt{in}} & {\states} & {\mathtt{out}}
      \arrow["\triangleright", from=1-1, to=1-2]
      \arrow["{\sigma \in \alphabetQO}"', from=1-2, to=1-2, loop, in=125, out=55, distance=10mm]
      \arrow["\triangleleft", from=1-2, to=1-3]
    \end{tikzcd}\] Then quasi-ordered automata are exactly enriched functors
  $\namedCat{I}_{\subwordQO} \to \QO$ that send $\mathtt{in}$ on the singleton
  QO and $\mathtt{out}$ on $2 = \{ \bot < \top \}$. The active learning
  algorithm of \Cref{thm:qo-automata-active-learning} below could thus be
  obtained by extending the functorial framework of
  \cite{colcombetAutomataMinimizationFunctorial2020,colcombetLearningAutomataTransducers2021}
  to enriched functors, but for the sake of conciseness we do not delve in this
  as we would only end up using the $\QO$-enriched case here.

  It seems the full expressiveness of enriched functors is needed to model
  quasi-ordered automata. Classical functors $\namedCat{I}_{\subwordQO} \to \QO$
  indeed only encompass that each $\delta_\sigma: (Q,{\sqsubseteq}) \to
  (Q,{\sqsubseteq})$ be monotone, but not that $\delta: \alphabetQO \times
  (Q,{\sqsubseteq}) \to (Q,{\sqsubseteq})$ be monotone. The latter condition can
  be encompassed by considering instead pointed $\QO$-coalgebras for the functor
  $2 \times \QO(\alphabetQO,-)$, but these in turn need not satisfy that $q
  \sqsubseteq \delta(\sigma, q)$ for every $(\sigma,q) \in \Sigma \times Q$,
  which is crucial in proving that $\delta^*: \subwordQO \times
  (Q,{\sqsubseteq}) \to (Q,{\sqsubseteq})$ is monotone, itself key in proving
  that quasi-ordered automata recognize upward-closed languages
  (\Cref{lemma:qo-automata-recognize-upward-closed-languages} below).
\end{remark}

Quasi-ordered automata are indeed a sensible family of automata to consider when
working with upward-closed languages:

\begin{lemma}
  \label{lemma:qo-automata-recognize-upward-closed-languages} The language
  recognized by a quasi-ordered automaton on $\alphabetQO$
  is upward-closed in $\subwordQO$.
\end{lemma}

\begin{proof}
  Let $L$ be a language recognized by a quasi-ordered automaton
  $((Q,{\sqsubseteq}),q_0, \delta, F)$ on $\alphabetQO$, and consider some $w
  \preceq_* w' \in \words$. Because $\delta^* : \subwordQO \times
  (Q,{\sqsubseteq}) \to (Q,{\sqsubseteq})$ is monotone, $\delta^*(w,q_0)
  \sqsubseteq \delta^*(w',q_0)$. In particular if $\delta^*(w,q_0) \in F$ and $w
  \in L$, then because $F$ is upward-closed $\delta^*(w',q_0) \in F$ and $w'
  \in L$ as well: $L$ is upward-closed.
\end{proof}

\begin{proposition}
  \label{prop:minimal-DFA-quasi-ordered-iff-language-upward-closed}
  Let $\alphabetQO$ be a finite quasi-ordered alphabet. Given a reachable DFA
  $A$ on $\Sigma$, it can be decided whether $A$ recognizes a language that is
  upward-closed in $\subwordQO$. If this is the case, $A$ is in fact a
  quasi-ordered automaton and the corresponding QO is computable; otherwise two
  words $w \preceq_* w'$ in $\subwordQO$ such that $A$ accepts $w$ but not $w'$
  are computable. All of this can moreover be done in polynomial time in the
  size of $A$.
\end{proposition}

\begin{proof}
  Let $A = (Q,q_0,\delta,F)$. Define an order on $Q$ as follows:
  \[ q \sqsubseteq q' \iff \forall w \in \words, \delta^*_w(q) \in F \Rightarrow
    \delta^*_w(q') \in F \] Deciding whether $q \sqsubseteq q'$ amounts to
  checking whether the language $L_q$ recognized when making $q$ initial is
  included in the language $L_{q'}$ recognized when making $q'$ initial. It is
  standard that this can be done in polynomial time, see e.g. \cite[Theorem
  2.4]{lodingAutomataFiniteTrees2021}: compute an automaton recognizing $L_q
  \cap L_{q'}^c$, and decide whether it recognizes an empty language. If this is
  not the case, an additional $w \in L_q \cap L_{q'}^c$, i.e. such that
  $\delta^*_w(q) \in F$ but $\delta^*_w(q') \notin F$, is computable in
  polynomial time.

  \looseness=-1
  If this order makes $A$ into a quasi-ordered automaton, then by
  \Cref{lemma:qo-automata-recognize-upward-closed-languages} $A$ recognizes an
  upward-closed language. Otherwise, we show next that the language recognized
  by $A$ cannot be upward-closed, and construct two words $w \preceq_* w'$ such
  that $A$ accepts $w$ but not $w'$.

  Here we use that $A$ is reachable so that for every $q \in Q$, there is a $u
  \in \words$ such that $q = \delta^*_u(q_0)$. If ${\sqsubseteq}$ fails to
  make into $A$ into a quasi-ordered automaton, then either
  \begin{itemize}
  \item $F$ is not upward-closed: this is in fact not possible, as $q \in F$
    means $\delta_\varepsilon(q) \in F$ and thus, if $q \sqsubseteq q'$, $q' =
    \delta_\varepsilon(q') \in F$ as well;
  \item there are some $q \sqsubseteq q' \in Q$, $\sigma \in \letters$ such that
    $\delta(\sigma,q) \not\sqsubseteq \delta(\sigma,q')$: this is not possible
    neither, since $q \sqsubseteq q'$ implies that for every $v \in \words$,
    $\delta^*_v(\delta_\sigma(q)) \in F \Rightarrow
    \delta^*_v(\delta_\sigma(q')) \in F$, and thus that $\delta_\sigma(q)
    \sqsubseteq \delta_\sigma(q')$;
  \item there are some $q \in Q$, $\sigma \preceq \sigma' \in \letters$ such
    that $\delta(\sigma, q) \not\sqsubseteq \delta(\sigma',q)$: taking $u,v \in
    \words$ respectively such that $\delta^*_u(q_0) = q$ and $\delta^*_v
    (\delta_\sigma(q)) \in F$ but $\delta^*_v(\delta_\sigma(q)) \notin F$, then
    $A$ accepts $u \sigma v$ but not $u \sigma' v$, yet $u \sigma v \preceq_* u
    \sigma' v$;
  \item there are some $q \in Q$, $\sigma \in \letters$ such that $q
    \not\sqsubseteq \delta_\sigma(q)$: taking $u,v \in \words$ respectively such
    that $\delta^*_u(q_0) = q$ and $\delta^*_v(q) \in F$ but
    $\delta^*_v(\delta_\sigma(q)) \notin F$, then $A$ accepts $uv$ but not
    $u \sigma v$, yet $uv \preceq_* u \sigma v$. \qedhere
  \end{itemize}
\end{proof}

\looseness=-1 Given an upward-closed language $U$, we call \emph{minimal
  quasi-ordered automaton} the minimal DFA recognizing $U$, equipped with the
quasi-order described in the proof of
\Cref{prop:minimal-DFA-quasi-ordered-iff-language-upward-closed}. The minimal
quasi-ordered automaton is state-minimal among all quasi-ordered automata
recognizing $U$, and it has a very specific structure, simple examples of which
are given by the automata in \Cref{fig:qo-automata}:

\begin{lemma}[structure of the minimal quasi-ordered automata]
  \label{lemma:minimal-qo-automata-structure}
  \looseness=-1 Let $A = ((Q,{\sqsubseteq}),q_0, \delta, F)$ be the minimal
  quasi-ordered automaton for an upward-closed language $U$. Then
  \begin{enumerate}
  \item $(Q,{\sqsubseteq})$ is not just a quasi-order, but an order, so that,
    except for transitions $q \xrightarrow{\sigma} q$ that loop on a state, the
    underlying graph of the automaton is acyclic;
  \item the initial state is the minimum of $(Q,{\sqsubseteq})$;
  \item unless $U$ is empty, an accepting state is reachable from any state;
  \item there is at most one accepting state, and it is the maximum of
    $(Q,{\sqsubseteq})$.
  \end{enumerate}
\end{lemma}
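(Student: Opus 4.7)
The plan is to leverage two facts. First, from the proof of \Cref{prop:minimal-DFA-quasi-ordered-iff-language-upwards-closed}, $q \sqsubseteq q'$ holds precisely when $L_q \subseteq L_{q'}$, where $L_q$ denotes the language recognized when $q$ is treated as the initial state. Second, the minimality of the underlying DFA means that distinct states recognize distinct languages, while the upwards-closure of $U$ in $\subwordQO$ implies that any word $u w v$ obtained from a word $w \in U$ by padding letters before or after still lies in $U$, since $w \preceq_* u w v$.

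For item (1), I would first argue antisymmetry: $q \sqsubseteq q'$ and $q' \sqsubseteq q$ force $L_q = L_{q'}$, hence $q = q'$ by minimality. Acyclicity then follows from the quasi-ordered-automaton axiom $q \sqsubseteq \delta(\sigma, q)$: any cycle $q_0' \xrightarrow{\sigma_1} q_1' \xrightarrow{\sigma_2} \cdots \xrightarrow{\sigma_n} q_n' = q_0'$ yields a chain $q_0' \sqsubseteq q_1' \sqsubseteq \cdots \sqsubseteq q_0'$, which antisymmetry collapses to a single state, so the cycle consists only of self-loops. Item (2) is in the same vein: iterating $q \sqsubseteq \delta(\sigma, q)$ gives $q_0 \sqsubseteq \delta^*_w(q_0)$ for every word $w$, and reachability of the minimal DFA then yields $q_0 \sqsubseteq q$ for every $q \in Q$.

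Items (3) and (4) will both hinge on the observation that in the minimal DFA of an upwards-closed $U$, every accepting state $q$ in fact satisfies $L_q = \words$. Indeed, if a word $w$ brings $q_0$ to $q \in F$ then $w \in U$, and for any $v$ one has $w \preceq_* w v$, hence $w v \in U$ and $\delta^*_v(q) \in F$. For (4), this shows that all accepting states recognize the same language, so by minimality there is at most one, call it $q_F$; and trivially $L_q \subseteq \words = L_{q_F}$ for every $q$, so $q \sqsubseteq q_F$ and $q_F$ is the maximum. For (3), pick any $w \in U$ (using $U \ne \varnothing$) and an arbitrary state $q = \delta^*_u(q_0)$; then $w \preceq_* u w$ yields $u w \in U$ by upwards-closure, so $\delta^*_w(q) \in F$ and the accepting state is reachable from $q$.

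None of these steps looks technically difficult; the only minor subtlety I anticipate is, in the acyclicity part of (1), carefully phrasing ``cycle in the underlying graph'' so that both proper cycles of length $\ge 2$ and degenerate single-edge cycles are uniformly collapsed by the antisymmetry argument. Everything else is essentially a direct translation of upwards-closure and DFA minimality through the defining inequality of quasi-ordered automata.
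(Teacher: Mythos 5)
Your proof is correct and follows essentially the same strategy as the paper's: unwind the quasi-order $q \sqsubseteq q' \iff L_q \subseteq L_{q'}$ from the proof of \Cref{prop:minimal-DFA-quasi-ordered-iff-language-upwards-closed}, use Myhill--Nerode minimality to get antisymmetry and uniqueness of the accepting state, and use upward-closure of $U$ via padding $w \preceq_* uwv$ everywhere else. The only point of divergence is item (2): you derive $q_0 \sqsubseteq \delta^*_w(q_0)$ by iterating the defining axiom $q \sqsubseteq \delta(\sigma,q)$ and transitivity, whereas the paper unwinds the semantic definition of $\sqsubseteq$ directly (showing $v \in L_{q_0} \Rightarrow uv \in U \Rightarrow v \in L_q$). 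Both are valid given that the minimal DFA of an upwards-closed language is already known to satisfy the quasi-ordered automaton axioms; yours is a touch more modular, the paper's a touch more self-contained.
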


\begin{proof} We implicitly use the Myhill-Nerode theorem, and the quasi-order
  constructed in the proof of
  \Cref{prop:minimal-DFA-quasi-ordered-iff-language-upward-closed}.
  \begin{enumerate}
  \item If two states $q$ and $q'$ are such that $q \sqsubseteq q'$ and $q'
    \sqsubseteq q$, then for every $w \in \words$, $\delta^*_w(q) \in F \iff
    \delta^*_w(q) \in F$. In a minimal DFA, this can never be true of two distinct
    states (otherwise the two states could be merged).
  \item For every $q = \delta^*_u(q_0) \in Q$ and $v \in \words$ such that
    $\delta^*_v(q_0) \in F$, $v \in U$. $U$ is upward-closed, hence $uv \succeq
    v$ is in $U$ as well and $\delta^*_v(q) = \delta^*_{uv}(q_0) \in F$: $q_0
    \sqsubseteq q$.
  \item Pick some $q \in Q$ and $v \in U$. The argument for item 2 above
    applies.
  \item Because the language recognized by the automaton is upward-closed, any
    accepting state $q$ is such that $\delta^*_w(q) \in F$ for every
    $w \in \words$. Hence the uniqueness of such an accepting state, and its
    maximality. \qedhere
  \end{enumerate}
\end{proof}

We now consider the problem of learning such a minimal quasi-ordered automaton
from an MAT. For a fixed upward-closed subset $U$ of $\subwordQO$, suppose given
an oracle able to answer two kinds of queries:
\begin{itemize}
\item \emph{membership queries} -- given a word $w \in \words$, the oracle
  decides whether $w \in U$ --
\item and \emph{equivalence queries} -- given a quasi-ordered automaton $A$ that
  is minimal for the upward-closed language it recognizes, the oracle decides
  whether this language is $U$, and, in the negative case, also outputs a
  counterexample word $w \in \words$ such that $w$ is in $U$ but is not
  recognized by $A$, or conversely.
\end{itemize}
Can $U$ be learned by querying this oracle a finite number of times? We answer
this in the positive:

\begin{theorem}[active learning of quasi-ordered automata]
  \label{thm:qo-automata-active-learning}
  Suppose given an MAT as above. There is an algorithm that computes the
  minimal quasi-ordered automaton recognizing $U$, and its complexity is
  polynomial in the size of this minimal automaton and the maximum length of a
  counterexample outputted by the oracle.
\end{theorem}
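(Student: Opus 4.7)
The plan is to adapt Angluin's classical $L^*$ observation-table machinery, using \Cref{prop:minimal-DFA-quasi-ordered-iff-language-upwards-closed} as a bridge between arbitrary DFA hypotheses and quasi-ordered automata. We maintain, as in the classical setting, a prefix-closed set $S \subseteq \words$ of access words and a suffix-closed set $E \subseteq \words$ of distinguishing contexts, filling the table via membership queries to the MAT. Closedness and consistency of the table are defined and enforced exactly as in Angluin's original algorithm, so that whenever the table is closed and consistent the extracted hypothesis is a reachable DFA $A$ that is minimal for the regular language $L'$ it recognizes. A priori, however, $L'$ need not be upwards-closed in $\subwordQO$, so $A$ cannot be forwarded to the equivalence oracle, which by assumption only accepts quasi-ordered automata.

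This is precisely where \Cref{prop:minimal-DFA-quasi-ordered-iff-language-upwards-closed} comes in. We run its polynomial-time procedure on $A$: either $A$ is already a quasi-ordered automaton -- in which case $L'$ is upwards-closed by \Cref{lemma:qo-automata-recognize-upwards-closed-languages}, $A$ is minimal for it, and $A$ is a legitimate input to the equivalence oracle -- or the procedure returns a witness pair $w \preceq_* w'$ with $A$ accepting $w$ but rejecting $w'$. A single additional membership query then turns such a pair into a genuine MAT-style counterexample: if $w \notin U$ then $w$ is a false positive of $A$, and otherwise $w \in U$ forces $w' \in U$ by upwards-closedness of $U$, making $w'$ a false negative of $A$. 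In both the oracle case and the internal case, the resulting counterexample is processed in Angluin's original fashion, e.g.\ by appending all of its suffixes to $E$ and refilling the table before looping again.

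Termination and the polynomial bound follow the usual $L^*$ analysis almost verbatim. Each iteration of the outer loop either fixes a closedness defect (adding a row), a consistency defect (adding a column), or handles a counterexample (adding a column that strictly distinguishes two previously identified row classes). Each such event strictly increases a quantity bounded above by the number of states $n$ of the minimal quasi-ordered automaton recognizing $U$, which is itself the minimal DFA recognizing $U$ by \Cref{lemma:minimal-qo-automata-structure}. This gives an $O(n)$ bound on the number of iterations; the tables stay polynomially sized in $n$ and in the maximum counterexample length $m$; and every step -- membership filling, closedness and consistency tests, the call to \Cref{prop:minimal-DFA-quasi-ordered-iff-language-upwards-closed}, and counterexample processing -- is polynomial in those sizes. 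On termination, the equivalence oracle has certified that the final hypothesis is a quasi-ordered automaton recognizing exactly $U$, and Angluin's minimality argument identifies it with the minimal quasi-ordered automaton for $U$.

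The main obstacle is checking that ``internal'' counterexamples extracted via \Cref{prop:minimal-DFA-quasi-ordered-iff-language-upwards-closed} drive the observation table forward in the same way as oracle-supplied ones, so that the standard $L^*$ termination bound is preserved. The monotonicity reasoning above, combined crucially with the upwards-closedness of $U$, is precisely what turns each such internal witness into a bona fide counterexample at the cost of one extra membership query, and thereby keeps the analysis parallel to Angluin's.
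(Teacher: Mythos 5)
Your proposal is correct and follows essentially the same route as the paper: both rely on \Cref{prop:minimal-DFA-quasi-ordered-iff-language-upwards-closed} to check whether each DFA hypothesis is quasi-ordered and, if not, to extract an internal counterexample $w \preceq_* w'$ (with one membership query deciding which word to use), so that Angluin's $L^*$ can proceed unchanged. The only cosmetic difference is that the paper phrases this as a black-box oracle reduction wrapping the classical $L^*$, whereas you unfold the observation-table machinery explicitly.
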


\begin{proof}
  We prove this by giving a polynomial-time reduction to the active learning
  problem for DFAs, where $U$ is only seen as a regular language, the complexity
  result being that of the classical active learning algorithm of Angluin
  \cite{angluinLearningRegularSets1987}. We therefore explain how to implement
  an oracle in the classical setting, call it $\oracle{U}$, from the oracle in
  the quasi-ordered setting, call it $\oracle{(U,{\preceq_*})}$: one can then
  learn the minimal DFA recognizing the regular language $U$ in polynomial-time
  using the classical active learning algorithm, and then equip this minimal DFA
  with a quasi-order in polynomial-time as explained in
  \Cref{prop:minimal-DFA-quasi-ordered-iff-language-upward-closed}.

  First, any membership query to $\oracle{U}$ can simply be forwarded to
  $\oracle{(U,{\preceq_*})}$: when it comes to membership queries, the two
  oracles have the same computing power. Consider now a DFA $A$ minimal for the
  language it recognizes, given to $\oracle{U}$ for an equivalence query. By
  \Cref{prop:minimal-DFA-quasi-ordered-iff-language-upward-closed}, one can then
  check in polynomial-time whether $A$ is in fact a quasi-ordered automaton:
  \begin{itemize}
  \item if it is the case we also get the QO structure on $A$, and can then
    forward $A$ to $\oracle{(U,{\preceq_*})}$ for an equivalence query;
  \item otherwise, we get some $w \preceq_* w'$ such that $A$ recognizes $w$ but
    not $w'$, and one of these two words must then be a counterexample to $A$
    recognizing $U$. \qedhere
  \end{itemize}
\end{proof}

\section{A proof by active learning of automata}
\label{sec:proof-by-active-learning}

We now give a new proof of \Cref{corollary:generalized-vg-words}, by reducing it
to the problem of actively learning quasi-ordered automata.

\begin{theorem}
  \label{thm:generalized-vg-words-automaton}
  Fix a finite quasi-ordered alphabet $\alphabetQO$ and an upward-closed subset
  $U$ of $\subwordQO$. Given an oracle that decides for any $*$-product $P$
  whether $U$ intersects $P$, a minimal quasi-ordered automaton recognizing $U$
  is computable.
\end{theorem}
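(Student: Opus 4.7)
The plan is to reduce the problem to the active learning of a quasi-ordered automaton for $U$ (Theorem~\ref{thm:qo-automata-active-learning}) by simulating a minimally adequate teacher for $U$ using only the $*$-product intersection oracle. Since the theorem guarantees that the minimal quasi-ordered automaton is produced after finitely many queries, it suffices to implement the two kinds of queries effectively.

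Membership queries are straightforward. For a word $w = \sigma_1 \cdots \sigma_n \in \words$, the set $\down{w} = \sigma_1^? \cdots \sigma_n^?$ is a $*$-product, and since $U$ is upwards-closed, $w \in U$ if and only if $U$ intersects $\down{w}$, which amounts to a single oracle call.

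The main work is in answering equivalence queries. Suppose the learner submits a quasi-ordered automaton $A$, minimal for its language $L(A)$; I have to decide whether $L(A) = U$ and, if not, return a counterexample word. I test the two inclusions separately. For $L(A) \subseteq U$, I first compute a finite basis $B_A$ of $L(A)$: this is possible because $L(A)$ is upwards-closed in a WQO (Lemma~\ref{lemma:qo-automata-recognize-upwards-closed-languages}), and the specific shape of a minimal quasi-ordered automaton (Lemma~\ref{lemma:minimal-qo-automata-structure} — graph acyclic apart from self-loops, a unique accepting state $q_F$) makes the minimal accepted words computable by enumerating the simple $q_0 \to q_F$ paths and pruning those that are not $\preceq_*$-minimal. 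I then query membership for each $b \in B_A$: any $b \notin U$ is a counterexample in $L(A) \setminus U$; otherwise $L(A) = \up{B_A} \subseteq U$. For $U \subseteq L(A)$, I use the effective complement construction recalled in the proof of Corollary~\ref{corollary:generalized-vg-words}, applied to $L(A) = \up{B_A}$, to rewrite $L(A)^c$ as a finite union of $*$-products $P_1 \cup \cdots \cup P_m$, and I query the oracle on each $P_i$. If all $P_i$ are disjoint from $U$, the inclusion holds; otherwise some $P_i$ intersects $U$, and I enumerate the words of the regular language $P_i$, using membership queries, until I find a witness $w \in P_i \cap U \subseteq U \setminus L(A)$.

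The structural insight driving the proof is that equivalence of the two upwards-closed regular languages $L(A)$ and $U$ can be tested by combining two dual constructions: a \emph{basis} representation of $L(A)$, tested pointwise against $U$ by membership queries, and a \emph{$*$-product} representation of $L(A)^c$, tested against $U$ by ideal-intersection queries. The main technical obstacle, already faced by Goubault-Larrecq, is the $*$-product decomposition of $L(A)^c$, for which I reuse the construction recalled in Section~\ref{sec:proof-by-rewriting}; everything else reduces to elementary manipulations of regular languages and the guarantees of Theorem~\ref{thm:qo-automata-active-learning}.
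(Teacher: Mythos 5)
Your proof is correct, and your overall architecture matches the paper's: reduce to \Cref{thm:qo-automata-active-learning}, implement membership queries by testing $U \cap \down{w}$, implement equivalence queries by separately testing the two inclusions $L(A) \subseteq U$ and $U \subseteq L(A)$. Your handling of $L(A) \subseteq U$ --- read a finite basis $B_A$ off the minimal strictly increasing accepting paths and test each element by a membership query --- is essentially identical to the paper's.

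Where you genuinely diverge is in testing $U \subseteq L(A)$. You pass through $B_A$ and then invoke the effective-complement rewriting recalled in \Cref{sec:proof-by-rewriting} to turn $(\up{B_A})^c$ into a finite union of $*$-products. The paper instead reads the $*$-product cover of $L(A)^c$ \emph{directly} off the automaton: it enumerates strictly increasing paths $q_0 \to q_1 \to \cdots \to q_n$ ending at non-accepting states one transition away from the accepting state, and forms $\letters_0^* \sigma_1^? \letters_1^* \cdots \sigma_n^? \letters_n^*$ where $\letters_i$ collects the self-loop letters at $q_i$ (by upward-closedness of $F$ all of $q_0,\ldots,q_n$ are non-accepting, so each such $*$-product is disjoint from $L(A)$, and co-reachability of the accepting state shows they jointly cover $L(A)^c$). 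Both routes are sound; the difference matters because one of the paper's stated motivations is precisely to \emph{avoid} the rewriting of $(\up{B})^c$ into $*$-products, which it singles out in \Cref{sec:proof-by-rewriting} as the technical and non-polynomial part of Goubault-Larrecq's argument. Your step 2 reimports exactly that machinery, so as an algorithm it loses the conceptual simplification that the automaton-based representation was meant to buy (both approaches remain exponential in the worst case, so this is a matter of structure, not asymptotics). If you want to align with the spirit of the paper, replace step 2 by the direct path enumeration sketched above; everything else you wrote can stay.
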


\begin{proof}
  We reduce this problem to that of actively learning quasi-ordered automata,
  for which we gave a polynomial time algorithm in
  \Cref{thm:qo-automata-active-learning}. In other words, we describe how to
  implement an MAT in the quasi-ordered setting by checking whether $U$
  intersects some well-chosen $*$-products.

  Answering a membership query for a $w = \sigma_1 \cdots \sigma_n \in \words$
  can be done by checking whether $U$ intersects $\sigma_1^? \cdots \sigma_n^?$:
  because $U$ is upward-closed, this is the case if and only if $w \in U$.
  Consider now some minimal quasi-ordered automaton $A = ((Q,{\sqsubseteq}),q_0,
  \delta, F)$ submitted to the MAT for an equivalence query.
  We decide whether $A$ recognizes $U$ in two steps, implicitly using the
  properties of $A$ described in \Cref{lemma:minimal-qo-automata-structure} and
  assuming $A$ does not recognize the empty language.
  \begin{itemize}
  \item First, we check whether $U$ contains the language recognized by $A$. For
    this, enumerate all the paths $q_0 \xrightarrow{\sigma_1} \cdots
    \xrightarrow{\sigma_n} q_n$ in $A$ that go from the initial state $q_0$ to
    the accepting state $q_n \in F$, that are \emph{strictly increasing} -- $q_0
    \sqsubset q_1 \sqsubset \cdots \sqsubset q_{n-1} \sqsubset q_n$ -- and that
    are \emph{minimal} -- for every $q_i \xrightarrow{\sigma_i} q_{i+1}$,
    $\sigma_i$ is minimal among the letters $\sigma$ such that
    $\delta_\sigma(q_i) = q_{i+1}$ (see
    \Cref{fig:strictly-increasing-paths:to-accepting} for an example within the
    automaton of \Cref{fig:qo-automata:example}). Check whether $\sigma_1 \cdots
    \sigma_n \in U$ with a membership query. If this is not the case, $\sigma_1
    \cdots \sigma_n$ is a counterexample. Otherwise, if no path yields such a
    counterexample, then all the words accepted by $A$ are in $U$ as well: if
    $A$ accepts $w$, then, for one of the paths $q_0 \xrightarrow{\sigma_1}
    \cdots \xrightarrow{\sigma_n} q_n$ enumerated above, $w \succeq_* \sigma_1
    \cdots \sigma_n$ and $w \in U$ as $U$ is upward-closed.
  \item Second, we check whether $U$ is contained in the language recognized by
    $A$, i.e. whether every word not accepted by $A$ is not in $U$ neither. For
    this, enumerate all the strictly increasing paths $q_0
    \xrightarrow{\sigma_1} \cdots \xrightarrow{\sigma_n} q_n$ in $A$ that go
    from the initial state $q_0$ to a non-accepting state $q_n \notin F$ one
    transition away from the accepting state (see
    \Cref{fig:strictly-increasing-paths:to-non-accepting} for an example within
    the automaton of \Cref{fig:qo-automata:example}). Let $\letters_i \subseteq
    \letters$ be the subset of these letters $\sigma$ for which
    $\delta_\sigma(q_i) = q_i$. Then any word $w$ that is not accepted by $A$
    must be in one of the $*$-products $\letters_0^* \sigma_1^? \letters_1^*
    \cdots \letters_{n-1}^* \sigma_n^? \letters_n^*$ corresponding to one of
    these paths: the run for $w$ in $A$ must stop in a non-accepting state, and,
    since $ww'$ is accepted by $A$ for any $w'$ accepted by $A$, the run for $w$
    can be completed into a run that ends in the accepting state. It is thus
    enough to check whether $U$ intersects any of these $*$-products. If this is
    the case for one of these $*$-products $P$, then enumerating the words in
    $P$ will end up producing a counterexample $w \in \Sigma^*$ such that $w \in
    U$ but $w$ is not accepted by $A$. \qedhere
  \end{itemize}
\end{proof}

\begin{figure}[h]
  \centering
  \caption[Strictly increasing paths.]{Strictly increasing paths in the
    automaton of \Cref{fig:qo-automata:example}.}
  \label{fig:strictly-increasing-paths}
  \begin{subfigure}{.4\textwidth}
    \caption{Two strictly increasing paths to the accepting state. They are
      minimal as soon as $a \preceq b \Rightarrow b \preceq a$ in $\alphabetQO$.}
    \label{fig:strictly-increasing-paths:to-accepting}
    \vspace{-15pt}
    \includegraphics{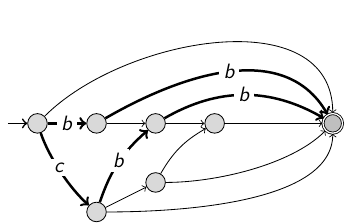}
  \end{subfigure}
  \qquad
  \begin{subfigure}{.4\textwidth}
    \caption{Two strictly increasing paths to non-accepting states one
      transition away from the accepting state.}
    \label{fig:strictly-increasing-paths:to-non-accepting}
    \vspace{-15pt}
    \includegraphics{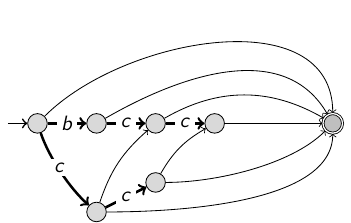}
  \end{subfigure}
\end{figure}

 Note that a finite basis of $U$ can then be obtained from its minimal
 quasi-ordered automaton by enumerating words labelling minimal strictly
 increasing paths. 

 This new proof partially addresses the two concerns highlighted in
 \Cref{sec:proof-by-rewriting}: upward-closed languages are now represented
 using automata, and not finite bases, and the MAT learning procedure (that we
 leave as a black box) takes care of efficiently enumerating words and
 constructing guesses to submit for an equivalence query. The reduction we
 provide is where the non-polynomial-time part happens: what now makes the
 algorithm exponential-time is the enumeration of paths to the accepting state
 in a minimal quasi-ordered automaton, and the brute-force enumeration of words
 in a $*$-product to produce a counterexample.

\section{The Valk-Jantzen lemma for quotients of WQOs}
\label{sec:generalized-vj-quotients}

We now move on from quasi-orders on the finitely generated free monoid $\words$
of words, to quasi-orders on finitely generated monoids. We thus consider a
monoid $M$ equipped with a quasi-order ${\preceq_M}$, and generated by a finite
quasi-ordered alphabet $\alphabetQO$: we assume there is a surjective
homomorphism of monoids $e_M: \words \to M$ that is also a monotone function
$\subwordQO \to (M,{\preceq_M})$. The surjectivity and monotonicity of $e_M$
automatically makes $(M,{\preceq_M})$ into a WQO, and $e_M$ can be used to
describe the ideals of $(M,{\preceq_M})$:

\begin{lemma}[ideals of a quasi-ordered monoid] \label{lemma:ideals-monoid} The
  ideals of a quasi-ordered monoid $(M,{\preceq_M})$ generated by $\alphabetQO$
  are exactly the downward-closures of images of languages specified by
  $*$-products:
  \[ \Idl(M,{\preceq_M}) = \suchthat{\down{e_M[I]}}{I \in \Idl\subwordQO}\]
\end{lemma}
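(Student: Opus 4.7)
The plan is to show both inclusions separately, the forward direction ($\supseteq$) being straightforward and the backward direction ($\subseteq$) requiring a small detour through the ideal decomposition of downward-closed subsets.

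For the inclusion $\supseteq$, I fix $I \in \Idl\subwordQO$ and check that $\down{e_M[I]}$ is non-empty, downward-closed and directed in $(M,{\preceq_M})$. Non-emptiness follows from non-emptiness of $I$, downward-closure is immediate from the definition of $\down{\cdot}$. For directedness, given $m_1, m_2 \in \down{e_M[I]}$, pick $w_1, w_2 \in I$ with $m_i \preceq_M e_M(w_i)$; by directedness of $I$ there is $w \in I$ above both $w_1$ and $w_2$ in $\subwordQO$, and by monotonicity of $e_M$ the element $e_M(w) \in e_M[I]$ is above both $m_1$ and $m_2$ in $(M,{\preceq_M})$.

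For the inclusion $\subseteq$, I fix $J \in \Idl(M,{\preceq_M})$ and consider the preimage $D \coloneq e_M^{-1}(J) \subseteq \words$. Using surjectivity, monotonicity of $e_M$, and the fact that $J$ is non-empty and downward-closed, I check that $D$ is non-empty and downward-closed in $\subwordQO$. The key subtlety is that $D$ itself need not be directed, so it need not be an ideal of $\subwordQO$. However, since $\subwordQO$ is a WQO, every downward-closed subset decomposes as a finite union of ideals \cite[Lemma 2.6 and Proposition 2.10]{goubault-larrecqIdealApproachComputing2020}: write $D = I_1 \cup \cdots \cup I_n$ with each $I_k \in \Idl\subwordQO$. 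Applying $e_M$ and taking downward-closures, and using $e_M[e_M^{-1}(J)] = J$ by surjectivity together with the fact that $J$ is downward-closed, I obtain
\[ J = \down{e_M[D]} = \down{e_M[I_1]} \cup \cdots \cup \down{e_M[I_n]}. \]

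The main obstacle is then to collapse this finite union into a single term. For this I use the well-known irreducibility property of ideals: if an ideal $J$ is contained in a finite union of downward-closed subsets, then it is contained in one of them. Indeed, if $J$ were not contained in any single $\down{e_M[I_k]}$, I could pick witnesses $m_k \in J \setminus \down{e_M[I_k]}$ and use directedness of $J$ to find $m \in J$ above all of them; then $m$ would lie in some $\down{e_M[I_k]}$, forcing $m_k$ to lie there too by downward-closure, a contradiction. Hence $J = \down{e_M[I_k]}$ for some $k$, which is the desired representation.
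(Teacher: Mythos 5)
Your proof is correct and takes essentially the same approach as the paper: show $\supseteq$ by directly verifying the ideal axioms for $\down{e_M[I]}$, and show $\subseteq$ by taking the preimage under $e_M$, decomposing it into a finite union of ideals of $\subwordQO$, pushing forward, and invoking join-irreducibility of ideals. The only cosmetic difference is that you prove join-irreducibility inline via directedness, whereas the paper cites it as a known lemma.
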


\begin{proof}
  This result is an instance of \cite[Proposition
  5.1]{goubault-larrecqIdealApproachComputing2020}, but the phrasing differs
  slightly so we repeat the proof here for completeness.

  On the one hand, if $I$ is an ideal of $\subwordQO$ (a language described by a
  $*$-product) then $\down{e_M[I]}$ is an ideal of $(M,{\preceq_M})$. Indeed,
  $e_M[I]$ is non-empty and directed because $I$ is: for every $x, y \in
  e_M[I]$, write $x = e_M(x')$ and $y = e_M(y')$ with $x',y' \in I$; then there
  is a $z' \in I$ such that $z' \succeq_* x'$ and $z' \succeq_* y'$, so that
  $e_M(z') \in e_M[I]$, $e_M(z') \succeq_M e_M(x')$ and $e_M(z') \succeq_M
  e_M(y')$. $\down{e_M[I]}$ is thus also non-empty and directed, and it is of
  course downwards-closed: it is an ideal.

  On the other hand, we show that every ideal of $(M,{\preceq_M})$ is of the
  shape $\down{e_M[I]}$ for some $I \in \Idl\subwordQO$. To see this, we use the
  following characterization of ideals in a WQO: a downwards-closed subset $D$
  can always be written as a finite union $D = \bigcup_{i = 1}^n I_i$ of ideals
  \cite[Lemma 2.6(2)]{goubault-larrecqIdealApproachComputing2020}, and $D$ is an
  ideal if and only if it is join-irreducible, i.e. for every finite family of
  downwards-closed subsets $D_1, \ldots, D_n$, if $D = \bigcup_{i = 1}^n D_i$
  then $D = D_i$ for some $1 \le i \le n$ \cite[Lemma
  2.5]{goubault-larrecqIdealApproachComputing2020}. Consider thus a
  downwards-closed subset $D$ of $(M,{\preceq_M})$. Then $e_M^{-1}[D]$ is
  downwards-closed in $\subwordQO$, hence $e_M^{-1}[D] = \bigcup_{i = 1}^n I_i$ for
  some $I_1, \ldots, I_n \in \Idl\subwordQO$. But then
  \[ D = \down{e_M\mathopen{}\left[e_M^{-1}[D]\right]\mathclose{}} =
    \down{e_M\mathopen{}\left[ \bigcup_{i = 1}^n I_i \right]\mathclose{}} =
    \down{\bigcup_{i = 1}^n e_M[I_i]} = \bigcup_{i = 1}^n \down{e_M[I_i]} \] hence
  when $D$ is an ideal it must be equal to $\down{e_M[I_i]}$ for some $1 \le i \le
  n$.
\end{proof}

If $P$ is a $*$-product that specifies an ideal $I$ of $\subwordQO$, write by
abuse of notation $e_M[P]$ for the downard-closed set $e_M[I]$ of
$(M,{\preceq_M})$. Finitely generated quasi-ordered monoids also enjoy a
Valk-Jantzen-like result:

\begin{theorem} \label{thm:generalized-vj-monoids} Fix a finite quasi-ordered
  alphabet $\alphabetQO$ and an upward-closed subset $U$ of a quasi-ordered
  monoid $(M,{\preceq_M})$ generated by $\alphabetQO$. Given an oracle that
  decides for any $*$-product $P$ whether $U$ intersects $\down{e_M[P]}$, a
  finite basis of $U$ is computable as soon as $e_M$ is.
\end{theorem}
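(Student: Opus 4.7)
The plan is to reduce this to \Cref{corollary:generalized-vg-words} by pulling $U$ back to the free monoid along $e_M$. Define $V \coloneq e_M^{-1}[U] \subseteq \words$. Since $e_M : \subwordQO \to (M,{\preceq_M})$ is monotone and $U$ is upwards-closed, $V$ is upwards-closed in $\subwordQO$.

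Next I would show how to simulate an intersection oracle for $V$ against $*$-products using the given oracle. For any $*$-product $P$, specifying the ideal $I \in \Idl\subwordQO$, observe that
\[ V \cap I \ne \varnothing \iff e_M[I] \cap U \ne \varnothing \iff \down{e_M[I]} \cap U \ne \varnothing, \]
where the first equivalence uses the definition of $V$ and the second uses that $U$ is upwards-closed. Since $e_M$ is computable, we can actually evaluate the right-hand side using the provided oracle. Hence \Cref{corollary:generalized-vg-words} applies and yields a finite basis $B' \subseteq \words$ of $V$ in $\subwordQO$.

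It remains to show that $B \coloneq e_M[B']$ is a finite basis of $U$ in $(M,{\preceq_M})$. Computability and finiteness of $B$ follow from those of $B'$ and $e_M$. For the inclusion $\up{B} \subseteq U$, any $x \succeq_M e_M(b')$ with $b' \in B' \subseteq V$ satisfies $e_M(b') \in U$, and so $x \in U$ by upwards-closure. Conversely, if $x \in U$, then by surjectivity of $e_M$ there exists $w \in \words$ with $e_M(w) = x$; such a $w$ lies in $V$, hence $w \succeq_* b'$ for some $b' \in B'$, and monotonicity of $e_M$ gives $x = e_M(w) \succeq_M e_M(b') \in B$.

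The main point worth highlighting is really just the oracle simulation, which is where the hypothesis is precisely engineered so that the reduction works: $U \cap \down{e_M[P]}$, not merely $U \cap e_M[P]$, is the right thing to test, because $\down{e_M[P]}$ is the ideal of $(M,{\preceq_M})$ corresponding to $P$ under the characterization of \Cref{lemma:ideals-monoid}. No minimality of the returned basis is claimed, but (as in the proof of \Cref{corollary:generalized-vg-words}) a minimal one can be extracted afterwards by removing elements of $B$ that are above another element of $B$, using that $(M,{\preceq_M})$ is effective enough to compare images of the generators.
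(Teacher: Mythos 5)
Your proof is correct and follows essentially the same route as the paper: pull $U$ back along $e_M$ to $V = e_M^{-1}[U]$, simulate an intersection oracle for $V$ against $*$-products via the chain of equivalences $V \cap P \neq \varnothing \iff e_M[P] \cap U \neq \varnothing \iff \down{e_M[P]} \cap U \neq \varnothing$ (the last step using upwards-closure of $U$), apply \Cref{corollary:generalized-vg-words} to $V$, and push the resulting basis forward as $e_M[B']$. One caveat on your closing remark: the theorem deliberately claims only a \emph{finite} basis and not a minimal one, precisely because the post-hoc reduction you sketch requires deciding $\preceq_M$ on $M$, and that is not an assumption the statement makes -- $e_M$ being computable does not by itself let you compare two elements of $M$. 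The paper points this out explicitly at the end of its own proof, so your appeal to ``$(M,{\preceq_M})$ is effective enough to compare images of the generators'' smuggles in an extra hypothesis.
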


\begin{proof}
  We reduce this problem to that of computing a finite basis of the
  upward-closed subset $e_M^{-1}[U]$ of $\subwordQO$. For this it is enough, by
  \Cref{corollary:generalized-vg-words}, to decide, given a $*$-product $P$,
  whether $e_M^{-1}[U]$ intersects $P$. But $e_M^{-1}[U]$ intersects $P$ if and
  only if $U$ intersects $\down{e_M[P]}$: if $x \in e_M^{-1}[U] \cap P$, then
  $e_M(x) \in U \cap \down{e_M[P]}$; conversely, if $x \in U \cap \down{e_M[P]}$,
  then there is a $y \in P$ such that $x \le e_M(y)$, and of course $e_M(y) \in
  U \cap \down{e_M[P]}$ as $U$ is upward-closed, so that $y \in e_M^{-1}[U] \cap
  P$.

  Once a minimal finite basis $B$ of $e_M^{-1}[U]$ is computed, $e_M[B]$ is a
  finite basis of $U$. Note that $e_M[B]$ need not be a minimal basis, and need
  not be effectively reducable to a minimal one, as that would require the
  quasi-order ${\preceq_M}$ on $M$ to be decidable, an assumption we have not
  made.
\end{proof}

\Cref{thm:generalized-vj-monoids} is not just an instance of Goubault-Larrecq's
\cite[Theorem 1]{goubault-larrecqGeneralizationResultValk2009}. Indeed, the
latter result requires an effectivity assumption (namely, the effective
complement property we mentioned in the proof in
\Cref{sec:proof-by-rewriting}) which need not be true of any quotient of the
free monoid, as discussed in \cite[Theorem
5.2]{goubault-larrecqIdealApproachComputing2020}.

\looseness=-1 We restricted ourselves to monoids, but the careful reader may
have noticed that nowhere in the proofs of
\Cref{lemma:ideals-monoid,thm:generalized-vj-monoids} we actually use the fact
that $\subwordQO$ is a subword ordering or that $M$ is a monoid. We could in
fact have stated \Cref{thm:generalized-vj-monoids} more generally, and it would
then informally read as follows. \textit{Fix a WQO $(X,{\le})$ for which
  \Cref{thm:generalized-vj} holds, a monotone surjection $e: (X,{\le}) \to
  (Y,{\le'})$, and an upward-closed subset $U$ of $(Y,{\le'})$. Then
  $(Y,{\le'})$ is a WQO and, given an oracle that decides for any $I \in
  \Idl(X,{\le})$ whether $U$ intersects $\down{e[I]}$, a finite basis of $U$ is
  computable}. Again, we do not make this any more formal to avoid delving in
the topic of effective representations of WQOs and their ideals.

\begin{example} \label{ex:active-learning-vj}
  Let $(\letters,{=})$ be an alphabet with $k$ letters. Then $\naturals^k$, also
  known as the free commutative monoid with $k$ generators, is a quotient of
  $\words$: the surjection $\abs{-}: \words \to \naturals^k$, written $\abs{w} =
  (\abs{w}_1,\ldots,\abs{w}_k)$ counts the number of appearances of each letter
  in a word. This surjection is monotone from the subword ordering on $\words$
  to the component-wise ordering on $\naturals^k$. Given a $*$-product
  $\letters'_0 \sigma_1^? \letters'_1 \cdots \letters'_{n-1} \sigma_n^?
  \letters'_n$, the downward-closure of its image by $\abs{-}$ is the subset of
  elements $(x_1, \ldots, x_k) \in \naturals^k$ such that, for every $1 \le i
  \le k$,
  \begin{itemize}
  \item $x_i \in \naturals$ if the $i$-th letter appears in $\bigcup_{j = 0}^n
    \letters'_j$;
  \item $x_i \le \abs{\sigma_1 \cdots \sigma_n}_i$ otherwise.
  \end{itemize}
  Hence the corresponding instance of \Cref{thm:generalized-vj-monoids} is Valk
  and Jantzen's original 1985 result, except we have now proved it by reducing
  it to Angluin's 1987 active learning algorithm.
\end{example}

Representing elements of $\naturals^k$ as $k$-letter words is of course not very
efficient, and what is interesting in \Cref{ex:active-learning-vj} is how it
reduces the result of Valk and Jantzen to that of Angluin. But for other non-trivial
monoids, using word representations may still be a sensible choice: a trace
monoid where only a few pairs of letters are allowed to commute, for instance,
is very close to a monoid of words.

\section{Conclusion and further work}
\label{sec:conclusion}

In this work we have provided a new proof, relying on active learning of
automata, of the generalized Valk-Jantzen lemma for subword orderings. This
reduces in particular the original 1985 result by Valk and Jantzen for
$\naturals^k$ to Angluin's 1987 $L^*$ algorithm. Along the way, we have
described a new MAT-based learning algorithm, for quasi-ordered automata, and
have extended Goubault-Larrecq's generalized Valk-Jantzen lemma to finitely
generated monoids.

We have restricted ourselves to words on finite alphabets, but there are other
WQOs for which the generalized Valk-Jantzen lemma could reduce to an MAT-based
learning algorithm.
\begin{itemize}
\item Words on infinite (or very large) alphabets: this would require regrouping
  transitions in quasi-ordered automata by upward-closed subsets, so that they
  can still be finitely represented. This use of an additional structure to
  represent infinite alphabets finitely also comes up with \emph{nominal
    automata} \cite{bojanczykAutomataTheoryNominal2014}. More generally,
  enriched functors could provide a unifying framework for such state-machines
  on infinite alphabets: a Myhill-Nerode theorem for topos-enriched automata is
  notably given in \cite{iwaniackAutomataWToposesGeneral2024}.
\item Tree orderings: trees on well quasi-ordered families of terms can
  naturally be well quasi-ordered themselves
  \cite{kruskalWellQuasiOrderingTreeTheorem1960}, their ideals can be
  characterized as regular tree expressions \cite[\textsection
  10]{finkelForwardAnalysisWSTS2020}, but ``the technicalities are daunting''
  \cite{goubault-larrecqIdealApproachComputing2020}. Perhaps an approach based
  on the active learning algorithm for tree automata described in
  \cite{drewesLearningRegularTree2003} can soothe out these technicalities.
\end{itemize}

\bibliography{bibtex}

\begin{thebibliography}{10}

\bibitem{angluinLearningRegularSets1987}
Dana Angluin.
\newblock Learning regular sets from queries and counterexamples.
\newblock {\em Information and Computation}, 75(2):87--106, November 1987.
\newblock URL:
  \url{https://www.sciencedirect.com/science/article/pii/0890540187900526},
  \href {https://doi.org/10.1016/0890-5401(87)90052-6}
  {\path{doi:10.1016/0890-5401(87)90052-6}}.

\bibitem{aristoteActiveLearningDeterministic2024}
Quentin Aristote.
\newblock Active {{Learning}} of {{Deterministic Transducers}} with {{Outputs}}
  in {{Arbitrary Monoids}}.
\newblock In {\em 32nd {{EACSL Annual Conference}} on {{Computer Science
  Logic}} ({{CSL}} 2024)}, volume 288 of {\em Leibniz {{International
  Proceedings}} in {{Informatics}} ({{LIPIcs}})}, pages 11:1--11:20, Dagstuhl,
  Germany, 2024. Schloss-Dagstuhl - Leibniz Zentrum f{\"u}r Informatik.
\newblock URL:
  \url{https://drops.dagstuhl.de/entities/document/10.4230/LIPIcs.CSL.2024.11},
  \href {https://doi.org/10.4230/LIPIcs.CSL.2024.11}
  {\path{doi:10.4230/LIPIcs.CSL.2024.11}}.

\bibitem{barloccoCoalgebraLearningDuality2019}
Simone Barlocco, Clemens Kupke, and Jurriaan Rot.
\newblock Coalgebra {{Learning}} via {{Duality}}.
\newblock In Miko{\l}aj Boja{\'n}czyk and Alex Simpson, editors, {\em
  Foundations of {{Software Science}} and {{Computation Structures}}}, Lecture
  {{Notes}} in {{Computer Science}}, pages 62--79, Cham, 2019. Springer
  International Publishing.
\newblock \href {https://doi.org/10.1007/978-3-030-17127-8_4}
  {\path{doi:10.1007/978-3-030-17127-8_4}}.

\bibitem{bojanczykAutomataTheoryNominal2014}
Miko{\l}aj Boja{\'n}czyk, Bartek Klin, and S{\l}awomir Lasota.
\newblock Automata theory in nominal sets.
\newblock {\em Logical Methods in Computer Science}, Volume 10, Issue 3, August
  2014.
\newblock URL: \url{https://lmcs.episciences.org/1157}, \href
  {https://doi.org/10.2168/LMCS-10(3:4)2014}
  {\path{doi:10.2168/LMCS-10(3:4)2014}}.

\bibitem{colcombetAutomataMinimizationFunctorial2020}
Thomas Colcombet and Daniela Petrisan.
\newblock Automata {{Minimization}}: A {{Functorial Approach}}.
\newblock {\em Logical Methods in Computer Science}, Volume 16, Issue 1, March
  2020.
\newblock URL: \url{https://lmcs.episciences.org/6213/pdf}, \href
  {https://doi.org/10.23638/LMCS-16(1:32)2020}
  {\path{doi:10.23638/LMCS-16(1:32)2020}}.

\bibitem{colcombetLearningAutomataTransducers2021}
Thomas Colcombet, Daniela Petri{\c s}an, and Riccardo Stabile.
\newblock Learning {{Automata}} and {{Transducers}}: {{A Categorical
  Approach}}.
\newblock In Christel Baier and Jean {Goubault-Larrecq}, editors, {\em 29th
  {{EACSL Annual Conference}} on {{Computer Science Logic}} ({{CSL}} 2021)},
  volume 183 of {\em Leibniz {{International Proceedings}} in {{Informatics}}
  ({{LIPIcs}})}, pages 15:1--15:17, Dagstuhl, Germany, 2021. Schloss
  Dagstuhl--Leibniz-Zentrum f{\"u}r Informatik.
\newblock URL: \url{https://drops.dagstuhl.de/opus/volltexte/2021/13449}, \href
  {https://doi.org/10.4230/LIPIcs.CSL.2021.15}
  {\path{doi:10.4230/LIPIcs.CSL.2021.15}}.

\bibitem{drewesLearningRegularTree2003}
Frank Drewes and Johanna H{\"o}gberg.
\newblock Learning a {{Regular Tree Language}} from a {{Teacher}}.
\newblock In Zolt{\'a}n {\'E}sik and Zolt{\'a}n F{\"u}l{\"o}p, editors, {\em
  Developments in {{Language Theory}}}, Lecture {{Notes}} in {{Computer
  Science}}, pages 279--291, Berlin, Heidelberg, 2003. Springer.
\newblock \href {https://doi.org/10.1007/3-540-45007-6_22}
  {\path{doi:10.1007/3-540-45007-6_22}}.

\bibitem{finkelForwardAnalysisWSTS2020}
Alain Finkel and Jean {Goubault-Larrecq}.
\newblock Forward analysis for {{WSTS}}, part {{I}}: Completions.
\newblock {\em Mathematical Structures in Computer Science}, 30(7):752--832,
  August 2020.
\newblock URL:
  \url{https://www.cambridge.org/core/journals/mathematical-structures-in-computer-science/article/abs/forward-analysis-for-wsts-part-i-completions/E7183B3BE4D6B55364B06265E692638A},
  \href {https://doi.org/10.1017/S0960129520000195}
  {\path{doi:10.1017/S0960129520000195}}.

\bibitem{goubault-larrecqGeneralizationResultValk2009}
Jean {Goubault-Larrecq}.
\newblock On a {{Generalization}} of a {{Result}} by {{Valk}} and {{Jantzen}}.
\newblock Research Report LSV-09-09, {\'E}cole Normale Sup{\'e}rieure de
  Cachan, Cachan, May 2009.
\newblock URL: \url{https://hal.science/hal-03195658}.

\bibitem{goubault-larrecqIdealApproachComputing2020}
Jean {Goubault-Larrecq}, Simon Halfon, Prateek Karandikar, K.~Narayan Kumar,
  and Philippe Schnoebelen.
\newblock The {{Ideal Approach}} to {{Computing Closed Subsets}} in
  {{Well-Quasi-orderings}}.
\newblock In Peter~M. Schuster, Monika Seisenberger, Andreas Weiermann, Henrich
  Wansing, and Henrich Wansing, editors, {\em Well-{{Quasi Orders}} in
  {{Computation}}, {{Logic}}, {{Language}} and {{Reasoning}}: {{A Unifying
  Concept}} of {{Proof Theory}}, {{Automata Theory}}, {{Formal Languages}} and
  {{Descriptive Set Theory}}}, number~53 in Trends in {{Logic}}, pages 55--105.
  Springer Nature, Switzerland, 1 edition, 2020.
\newblock \href {https://doi.org/10.1007/978-3-030-30229-0_3}
  {\path{doi:10.1007/978-3-030-30229-0_3}}.

\bibitem{iwaniackAutomataWToposesGeneral2024}
Victor Iwaniack.
\newblock Automata in~{{W-Toposes}}, and~{{General Myhill-Nerode Theorems}}.
\newblock In Barbara K{\"o}nig and Henning Urbat, editors, {\em Coalgebraic
  {{Methods}} in {{Computer Science}}}, volume 14617 of {\em Lecture {{Notes}}
  in {{Computer Science}}}, pages 93--113, Cham, 2024. Springer Nature
  Switzerland.
\newblock \href {https://doi.org/10.1007/978-3-031-66438-0_5}
  {\path{doi:10.1007/978-3-031-66438-0_5}}.

\bibitem{jacobsAutomataLearningCategorical2014}
Bart Jacobs and Alexandra Silva.
\newblock Automata {{Learning}}: {{A Categorical Perspective}}.
\newblock In Franck {van Breugel}, Elham Kashefi, Catuscia Palamidessi, and Jan
  Rutten, editors, {\em Horizons of the {{Mind}}. {{A Tribute}} to {{Prakash
  Panangaden}}: {{Essays Dedicated}} to {{Prakash Panangaden}} on the
  {{Occasion}} of {{His}} 60th {{Birthday}}}, pages 384--406. Springer
  International Publishing, Cham, 2014.
\newblock \href {https://doi.org/10.1007/978-3-319-06880-0_20}
  {\path{doi:10.1007/978-3-319-06880-0_20}}.

\bibitem{kruskalWellQuasiOrderingTreeTheorem1960}
J.~B. Kruskal.
\newblock Well-{{Quasi-Ordering}}, {{The Tree Theorem}}, and {{Vazsonyi}}'s
  {{Conjecture}}.
\newblock {\em Transactions of the American Mathematical Society},
  95(2):210--225, 1960.
\newblock URL: \url{https://www.jstor.org/stable/1993287}, \href
  {https://arxiv.org/abs/1993287} {\path{arXiv:1993287}}, \href
  {https://doi.org/10.2307/1993287} {\path{doi:10.2307/1993287}}.

\bibitem{lodingAutomataFiniteTrees2021}
Christof L{\"o}ding and Wolfgang Thomas.
\newblock Automata on finite trees.
\newblock In Jean-{\'E}ric Pin, editor, {\em Handbook of {{Automata Theory}}},
  volume~1, pages 235--264. EMS Press, 1 edition, 2021.
\newblock URL: \url{https://doi.org/10.4171/Automata-1/7}, \href
  {https://doi.org/10.4171/AUTOMATA-1/7} {\path{doi:10.4171/AUTOMATA-1/7}}.

\bibitem{urbatAutomataLearningAlgebraic2020}
Henning Urbat and Lutz Schr{\"o}der.
\newblock Automata {{Learning}}: {{An Algebraic Approach}}.
\newblock In {\em Proceedings of the 35th {{Annual ACM}}/{{IEEE Symposium}} on
  {{Logic}} in {{Computer Science}}}, {{LICS}} '20, pages 900--914, New York,
  NY, USA, July 2020. Association for Computing Machinery.
\newblock URL: \url{https://dl.acm.org/doi/10.1145/3373718.3394775}, \href
  {https://doi.org/10.1145/3373718.3394775}
  {\path{doi:10.1145/3373718.3394775}}.

\bibitem{valkResidueVectorSets1985}
R{\"u}diger Valk and Matthias Jantzen.
\newblock The residue of vector sets with applications to decidability problems
  in {{Petri}} nets.
\newblock {\em Acta Informatica}, 21(6):643--674, March 1985.
\newblock \href {https://doi.org/10.1007/BF00289715}
  {\path{doi:10.1007/BF00289715}}.

\bibitem{vanheerdtLearningWeightedAutomata2020}
Gerco {van Heerdt}, Clemens Kupke, Jurriaan Rot, and Alexandra Silva.
\newblock Learning {{Weighted Automata}} over {{Principal Ideal Domains}}.
\newblock In Jean {Goubault-Larrecq} and Barbara K{\"o}nig, editors, {\em
  Foundations of {{Software Science}} and {{Computation Structures}}}, pages
  602--621, Cham, 2020. Springer International Publishing.
\newblock \href {https://doi.org/10.1007/978-3-030-45231-5_31}
  {\path{doi:10.1007/978-3-030-45231-5_31}}.

\bibitem{vanheerdtCALFCategoricalAutomata2020}
Gerrit~K. {van Heerdt}.
\newblock {\em {{CALF}} - Categorical Automata Learning Framework}.
\newblock PhD thesis, University College London, London, UK, 2020.
\newblock URL:
  \url{https://ethos.bl.uk/OrderDetails.do?uin=uk.bl.ethos.819869}.

\end{thebibliography}

\end{document}